\documentclass[nofootinbib,prl,aps,reprint,amsmath,amssymb]{revtex4-1}
\usepackage{hyperref}
\usepackage{changes}
\usepackage[section]{placeins}
\usepackage[titletoc]{appendix}
\usepackage{xcolor}
\usepackage{dsfont}
\usepackage{bm}
\usepackage{mathtools}
\usepackage{enumerate}
\DeclareMathAlphabet{\mathpzc}{OT1}{pzc}{m}{it}

\newcommand{\sayy}[1]{`#1'}
\interfootnotelinepenalty=10000 

\providecommand{\href}[2]{#2}

\usepackage{amsthm}
\newtheorem{theorem}{Theorem}

\def\be{\begin{equation}}
\def\ee{\end{equation}}
\def\bea{\begin{eqnarray}}
\def\eea{\end{eqnarray}}

\def\sig{\sigma}
\def\hsig{\hat{\sigma}}

\def\la{\langle}
\def\ra{\rangle}
\def\Eu{ \mathfrak{H} }

\def\obs{\mathcal{O}}

\definecolor{MyB}{rgb}{0.1,0.1,1.0}



\begin{document}
\title{Reconciling a decelerating Universe with cosmological observations} 
\author{Asta~Heinesen}
\email{asta.heinesen@ens--lyon.fr}
\affiliation{Univ Lyon, Ens de Lyon, Univ Lyon1, CNRS, Centre de Recherche Astrophysique de Lyon UMR5574, F--69007, Lyon, France}

\begin{abstract}
Can modern cosmological observations be reconciled with a general-relativistic Universe without an anti-gravitating energy source?  
Usually, the answer to this question by cosmologists is in the negative, and it is commonly believed that the observed excess dimming of supernovae relative to that in the Milne model is evidence for dark energy. 
In this paper, we develop {theorems that clarify} the conditions for such an excess dimming, based on which we argue that {the} answer to the above question may counter-intuitively be \sayy{yes}. 

\end{abstract}
\keywords{Redshift drift, relativistic cosmology, observational cosmology} 

\maketitle

\section{Introduction}
\label{sec:intro} 
Dark energy  
first became an established part of the cosmological paradigm in 1998, when data from supernovae of type Ia showed a greater dimming in their luminosity with redshift than what could be explained with a Milne universe model\footnote{The Milne model (empty FLRW model without a cosmological constant) defines an upper limit for the distance-redshift curve of the expanding general-relativistic Friedmann-Lema\^itre-Robertson-Walker (FLRW) models that obey the strong energy condition. }  
\cite{SupernovaSearchTeam:1998fmf,SupernovaCosmologyProject:1998vns}. When interpreted within the general-relativistic FLRW models, the data thus indicated the existence of dark energy, an anti-gravitating energy source violating the universally
attractive nature of gravity known from ordinary matter. 
Shortly after this realisation, it was proposed that the dimming of light from supernovae could alternatively be explained in universe models with ordinary matter forming a large-scale cosmic inhomogeneity as modelled by the Lema\^itre-Tolman-Bondi (LTB) models \cite{Pascual-Sanchez:1999xpt,Celerier:1999hp}. 
These models exhibit universal \emph{deceleration} of distances between geodesic test particles, but an observer who is placed properly relative to the inhomogeneity can infer the same {trend in} dimming of light from supernovae as is observed in an FLRW model with an accelerated scale factor; see \cite{2010A&A...518A..21C} for a review.  
The LTB metrics that can account for the observed supernova luminosities are challenged by complementary data \cite{Bull:2011wi,Redlich:2014gga} and are furthermore breaking with the Copernican principle. 
It is of fundamental interest if models that satisfy the strong energy condition \emph{and} the Copernican principle could produce an excess dimming of light from supernovae relative to the Milne model. 
{Despite of the potentially far reaching implications for fundamental physics if there is indeed a competitive cosmological model of this type, there has not yet been any systematic classifications made of space-time geometries that exhibit super-Milne dimming of light. } 

In this paper, we formulate {two theorems that bound the} angular diameter distance (luminosity distance) without employing conjectures for light propagation or any \emph{a priori} constraints on the underlying geometry; we only assume the geometrical optics description for light and general relativity as the gravitational theory. 
{The theorems define geometrical conditions (necessary and sufficient, respectively) for the angular diameter distance to exceed that of a Milne universe model, thus providing the first general classification of models that qualify for describing the observations of supernova lightcurves and complementary probes of cosmic distances.} 
{Based on these theorems, we discuss the circumstances under which a super-Milne dimming of light measured by Copernican observers may occur without dark energy.}


\section{Observed distances and redshifts in a general space-time}
\label{sec:zdist} 

\subsection{Assumptions and definitions} 
Consider a space-time manifold with a metric $g_{\mu \nu}$ of signature $(- + + +)$ and a Levi-Civita connection $\nabla_\mu$. 
Let a {beam} of light pass from a source to an observer in this space-time, and let the event of observation be labelled $\obs$. 
We assume that the geometrical optics approximation holds, such that the photons of the {beam} can be described as {null geodesic} test particles with 4-momentum $k^\mu$. 
We consider a \sayy{cosmic congruence} field with 4-velocity $u^\mu$ that is defined in the space-time neighbourhood of the {null beam, and make the unique decomposition}    
\bea
\label{kdecomp}
k^\mu = E (u^\mu - e^\mu) \, , \qquad E \equiv - u^\mu k_\mu \,  , 
\eea 
where $e^\mu$ is a spatial unit vector that is orthogonal to the cosmic 4-velocity field $u^\mu e_\nu = 0$. 
{We define the cosmological redshift and angular diameter distance as 
\bea
z  \equiv  \frac{E}{E_\obs} -1 \, , \qquad d_A \equiv \sqrt{\delta A}/\sqrt{\delta\Omega} \, , 
\eea 
respectively, 
where $\delta A$ is the physical area of the emitting object perpendicular to the direction of emission of the null ray and $\delta\Omega$ is its angular size relative to the observer at $\obs$ comoving with $u^\mu$.}  
We may assume that the photon number in the light bundle is preserved on the path from the emitter to the observer, such that Etherington's reciprocity theorem  \cite{2007GReGr..39.1047E} holds for deriving luminosity distance:  $d_L = (1+z)^2  d_A$.  In this case, it suffices to analyse angular diameter distance and redshift, from which the value of luminosity distance follows. 

\subsection{Evolution of cosmic redshift} 
The evolution of the cosmic redshift, $z$, along the null ray is given by 
\bea
\label{def:zprime}
\hspace*{-0.2cm} \frac{ {\rm d} z}{{\rm d} \lambda} = - E_\obs (1+z)^2  \Eu \, ,   
\eea 
where $\frac{ {\rm d} }{{\rm d} \lambda} \! \equiv \! k^\nu \nabla_\nu$ is the directional derivative along the null ray, and where we have introduced the \sayy{effective Hubble parameter} 
\bea
\label{def:Eu}
\hspace*{-0.2cm}  \Eu \equiv      \frac{ {\rm d} E^{-1}}{{\rm d} \lambda}    =   \frac{1}{3}\theta  - e^\mu a_\mu + e^\mu e^\nu \sigma_{\mu \nu} \, ,
\eea 
which reduces to the Hubble parameter \sayy{$\dot{a}/a$} in an FLRW geometry. We have made use of the kinematic decomposition 
\bea
\label{def:expu} 
&& \hspace{-0.3cm} \nabla_{\nu}u_\mu  = \frac{1}{3}\theta h_{\mu \nu }+\sig_{\mu \nu} + \omega_{\mu \nu} - u_\nu a_\mu \ , \nonumber \\ 
&& \hspace{-0.3cm}  \theta \equiv \nabla_{\mu}u^{\mu} ,  \; \;   \sig_{\mu \nu} \equiv  h_{ \la  \nu  }^{\, \beta}  h_{  \mu \ra }^{\, \alpha } \nabla_{  \beta }u_{\alpha  }  ,  \; \;    \omega_{\mu \nu} \equiv h_{ [  \nu  }^{\, \beta}  h_{  \mu ] }^{\, \alpha } \nabla_{  \beta}u_{\alpha  }     , 
\eea 
where $h_{\mu \nu} = g_{\mu \nu} + u_{\mu} u_{\nu}$ is the spatial projection tensor orthogonal to $u^\mu$, and 
where the triangular bracket $\la \ra$ around indices selects the tracefree symmetric part of the spatial tensor and $[ ]$ selects the anti-symmetric part. 
The variables $\theta$, $\sig_{\mu \nu}$, and $\omega_{\mu \nu}$ describe respectively the volume expansion, shear, and vorticity of the cosmic congruence, and $a^\mu \equiv  \dot{u}^\mu$ is the 4-acceleration of the individual observers in the congruence, where the overdot $\dot{} \! \equiv \! u^\nu \nabla_\nu$ represents the directional derivative along the cosmic flow-lines. 
The second derivative of $z$ is: 
\bea
\label{def:zdoubleprime}
\frac{ {\rm d^2} z}{{\rm d} \lambda^2} =  E_\obs^2   \left( 3 + \mathfrak{Q} \right) \Eu^2  (1+z)^3  \, ,  
\eea 
where the \sayy{effective deceleration parameter} 
\bea 
\label{def:dec}
 \mathfrak{Q}  &\equiv&    - 1 - \frac{1}{E} \frac{     \frac{ {\rm d} \Eu}{{\rm d} \lambda}    }{\Eu^2}  \, , 
\eea 
reduces to the usual FLRW deceleration parameter \sayy{$- a \ddot{a}/\dot{a}^2  $} in an FLRW geometry. 
In general, $\mathfrak{Q}$ can be expressed as \cite{Umeh:2013UCT,Heinesen:2020bej} 
\bea
\label{q}
\hspace*{-0.5cm}  \mathfrak{Q} &=&  - 1 -  \frac{ \overset{0}{\mathfrak{q}}   +  \bm{e} \cdot  \bm{{\overset{1}{\mathfrak{q}}}}   +    \bm{e} \bm{e} \cdot  \bm{{\overset{2}{\mathfrak{q}}}}     +    \bm{e} \bm{e} \bm{e} \cdot  \bm{{\overset{3}{\mathfrak{q}}}}    +    \bm{e} \bm{e} \bm{e} \bm{e} \cdot  \bm{{\overset{4}{\mathfrak{q}}}}   }{\Eu^2}    \, , 
\eea 
where $\bm{e} \cdot  \bm{{\overset{1}{\mathfrak{q}}}} \equiv e^\mu \overset{1}{\mathfrak{q}}_\mu$, $\bm{e} \bm{e} \cdot  \bm{{\overset{2}{\mathfrak{q}}}} \equiv e^\mu e^\nu \overset{2}{\mathfrak{q}}_{\mu \nu}$, etc., with multipole coefficients 
\bea
\label{qpoles}
&& \overset{0}{\mathfrak{q}} \equiv  \frac{1}{3}   \dot{\theta}  + \frac{1}{3} D_{   \mu} a^{\mu  } - \frac{2}{3}a^{\mu} a_{\mu}    - \frac{2}{5} \sigma_{\mu \nu} \sigma^{\mu \nu}    \, , \nonumber \\ 
&& \overset{1}{\mathfrak{q}}_\mu \equiv   -   \dot{a}_\mu  - \frac{1}{3} D_{\mu} \theta  + a^\nu \omega_{\mu \nu}  +  \frac{9}{5}  a^\nu \sigma_{\mu \nu}    -  \frac{2}{5}   D_{  \nu} \sigma^{\nu }_{\;  \mu  }   \, , \nonumber \\
&& \overset{2}{\mathfrak{q}}_{\mu \nu}  \equiv     \dot{\sigma}_{\mu \nu}    +   D_{  \la \mu} a_{\nu \ra } + a_{\la \mu}a_{\nu \ra }     - 2 \sigma_{\alpha (  \mu} \omega^\alpha_{\; \nu )}   - \frac{6}{7} \sigma_{\alpha \la \mu} \sigma^\alpha_{\; \nu \ra }   \, , \nonumber \\ 
 && \overset{3}{\mathfrak{q}}_{\mu \nu \rho}  \equiv -  D_{ \la \mu} \sigma_{\nu   \rho \ra }    -  3  a_{ \la \mu} \sigma_{\nu \rho \ra }    \, , \quad       \overset{4}{\mathfrak{q}}_{\mu \nu \rho \kappa}  \equiv  2   \sigma_{\la \mu \nu } \sigma_{\rho \kappa \ra} \, , 
\eea 
where $D_\mu$ is the covariant spatial derivative as projected onto the 3-dimensional space orthogonal to $u^\mu$. 
Apart from the volume acceleration term, $\propto \dot{\theta}$, there are a number of terms arising from anisotropic and inhomogeneous universe kinematics, which are generally not constrained in sign or amplitude by general-relativistic energy conditions. 
The interpretation of $\mathfrak{Q}$ as a direct measure of the deceleration of distances between test particles is thus generally \emph{not} valid except for in the strict FLRW case.  
While the dimensionless effective deceleration parameter $\mathfrak{Q}$ may become singular in regions where $\Eu = 0$, the \emph{dimensionful} effective deceleration parameter $\Eu^2 \mathfrak{Q}$ remains finite when the first and second derivatives of $z$ are finite.

 \subsection{Evolution of cosmic distance}
The evolution of the cosmic angular diameter distance, $d_A$, along a null ray of the congruence is given by the focusing equation, cf. equation~44 in~\cite{Perlick:2010zh}, 
\be
\label{focuseq}
\frac{{\rm d}^2 d_A }{ {\rm d} \lambda^2} 
= - {\mathcal{F}} d_{A}  \, , \quad  \mathcal{F} \equiv \left(\frac{1}{2} \hsig^{\mu \nu} \hsig_{\mu \nu} + \frac{1}{2}k^{\mu}k^\nu R_{\mu \nu}  \right)
\ee 
where $\hsig_{\mu \nu}$ is the shear tensor of the photon congruence, and $R_{\mu \nu}$ is the Ricci curvature of the space-time{, and where we have introduced $\mathcal{F}$ as a short hand notation for the combined focusing.}  
The evolution of $d_A$ may be solved for with knowledge of shear and Ricci curvature by use of the initial conditions at the vertex of the observer's lightcone 
\be
\label{dAinit}
d_A \rvert_\obs = 0 \, , \qquad \frac{{\rm d} d_A }{ {\rm d} \lambda } \rvert_\obs = - E_\obs \, , 
\ee  
{which can formally be obtained by expanding the Jacobi map around the observer  \cite{Seitz:1994xf}.} 
We have from \eqref{def:zprime} that 
\be
\label{dAinitz}
 \frac{{\rm d} z }{ {\rm d} d_A  } \Big\rvert_\obs = \Eu_\obs \, , 
 \ee
which gives the observational interpretation of $\Eu_\obs$ as the slope of the redshift--distance function at the observer.

\section{Distance concavity in general-relativistic space-times} 

\subsection{Bounds on the dimming of light} 

Let us consider general-relativistic space-time scenarios where the null energy condition is satisfied, meaning $k^\mu k^\nu R_{\mu \nu} \geq 0$. 
{From this it immediately follows from \eqref{focuseq} that there is a positive focus of the bundle, and 
t}ogether with the initial condition \eqref{dAinit} this means that 
\be
\label{dAprimebound}
\frac{{\rm d} d_A }{ {\rm d} \lambda } \geq - E_\obs  \,  
\ee 
at any point along the null ray, where we recall that $\lambda$ is increasing towards the observer. 
{We shall prove the following theorems that bound the observed angular diameter distance (or luminosity distance) as a function of the observed redshift.}

\begin{theorem}[\textbf{Sub-Milne dimming}] 
\label{ths:second}
Consider a general-relativistic space-time obeying the null energy condition. 
For a null geodesic congruence with $\Eu_\obs > 0$, the following applies: 
The angular diameter distance, $d_A$, is bounded from above by the Milne universe model in terms of its redshift for a section of the null geodesic path $[\lambda_1,\lambda_2]$ if 
\be
\label{Qident}
\hspace{-0.05cm}  \la \! \la \mathfrak{Q}  \Eu^2 \ra \! \ra   \geq  0   \, ,  \qquad   \la \! \la \mathfrak{Q}  \Eu^2 \ra \! \ra \equiv   \frac{\int_{\lambda}^{\lambda_\obs} \!  {\rm d} \lambda'  \int_{\lambda'}^{\lambda_\obs} \!  {\rm d} \lambda''  \mathfrak{Q}  \Eu^2 }{\frac{1}{2} (\lambda_\obs - \lambda )^2} \, , 
\ee  
is satisfied for all $\lambda \in [\lambda_1,\lambda_2]$, corresponding to $d_A \in [d_A(\lambda_2), d_A(\lambda_1)]$. 
\end{theorem}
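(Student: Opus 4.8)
The plan is to compare $d_A$, as a function of the observed redshift, against the angular diameter distance of the Milne model \emph{whose Hubble constant equals the locally measured expansion rate} $\Eu_\obs$. That model has the familiar empty-FLRW distance--redshift relation
\be
 d_A^{\rm M}(z) \;=\; \frac{1}{2\Eu_\obs}\left[\,1-\frac{1}{(1+z)^{2}}\,\right] \, ,
\ee
and one checks immediately that $d_A^{\rm M}(0)=0$ and ${\rm d}d_A^{\rm M}/{\rm d}z\rvert_{z=0}=1/\Eu_\obs$. Hence, evaluated along the actual null ray with $z=z(\lambda)$, the function $d_A^{\rm M}(z(\lambda))$ carries exactly the initial data \eqref{dAinit} of the true distance: $d_A^{\rm M}\rvert_\obs=0$ and ${\rm d}d_A^{\rm M}/{\rm d}\lambda\rvert_\obs=-E_\obs$, the latter using \eqref{def:zprime} together with \eqref{dAinitz}.

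The central step is to differentiate $d_A^{\rm M}(z(\lambda))$ twice along the ray, inserting \eqref{def:zprime} for ${\rm d}z/{\rm d}\lambda$ and \eqref{def:zdoubleprime} for ${\rm d}^2z/{\rm d}\lambda^2$. The powers of $(1+z)$ cancel and one is left with
\be
 \frac{{\rm d}^2 d_A^{\rm M}}{{\rm d}\lambda^2} \;=\; \frac{E_\obs^2}{\Eu_\obs}\,\mathfrak{Q}\,\Eu^2 \, ,
\ee
with $\mathfrak{Q}\Eu^2$ the finite dimensionful effective deceleration parameter. This identity is what dictates the precise form of the hypothesis \eqref{Qident}.

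Now introduce $\Delta(\lambda)\equiv d_A^{\rm M}(z(\lambda))-d_A(\lambda)$; subtracting the focusing equation \eqref{focuseq} from the identity above gives the initial value problem $\Delta\rvert_\obs=0$, ${\rm d}\Delta/{\rm d}\lambda\rvert_\obs=0$, and ${\rm d}^2\Delta/{\rm d}\lambda^2=E_\obs^2\Eu_\obs^{-1}\mathfrak{Q}\Eu^2+\mathcal{F}d_A$. Integrating twice from $\lambda_\obs$ down to $\lambda$ (Taylor's formula with integral remainder) yields
\be
 \Delta(\lambda)=\int_{\lambda}^{\lambda_\obs}\!{\rm d}\lambda'\!\int_{\lambda'}^{\lambda_\obs}\!{\rm d}\lambda''\left(\frac{E_\obs^2}{\Eu_\obs}\,\mathfrak{Q}\,\Eu^2+\mathcal{F}\,d_A\right)\, .
\ee
The null energy condition enters exactly here: it makes $\mathcal{F}=\tfrac12\hsig^{\mu\nu}\hsig_{\mu\nu}+\tfrac12 k^\mu k^\nu R_{\mu\nu}\geq 0$ — a square of the spacelike photon shear plus the assumed non-negative Ricci term — and $d_A\geq 0$, so the $\mathcal{F}d_A$ contribution to the double integral is non-negative and may be discarded. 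Recognising the remaining double integral as $\tfrac12(\lambda_\obs-\lambda)^2\,\la\!\la\mathfrak{Q}\Eu^2\ra\!\ra$ via \eqref{Qident}, one obtains $\Delta(\lambda)\geq\tfrac{E_\obs^2}{2\Eu_\obs}(\lambda_\obs-\lambda)^2\,\la\!\la\mathfrak{Q}\Eu^2\ra\!\ra$; since $E_\obs>0$ and $\Eu_\obs>0$, the hypothesis $\la\!\la\mathfrak{Q}\Eu^2\ra\!\ra\geq 0$ forces $\Delta(\lambda)\geq 0$, i.e.\ $d_A(\lambda)\leq d_A^{\rm M}(z(\lambda))$. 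Imposing this for every $\lambda\in[\lambda_1,\lambda_2]$ gives the sub-Milne bound throughout that section; in the distance variable this is the interval $d_A\in[d_A(\lambda_2),d_A(\lambda_1)]$, since $\lambda$ increases toward $\obs$ where $d_A$ vanishes, so $d_A$ grows back along the ray.

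I expect the one genuine obstacle to be the derivative computation leading to the identity for ${\rm d}^2 d_A^{\rm M}/{\rm d}\lambda^2$: the comparison Milne model must be normalised so that $\Delta$ and ${\rm d}\Delta/{\rm d}\lambda$ vanish at $\obs$, and the chain rule through \eqref{def:zprime}--\eqref{def:zdoubleprime} must be carried out carefully enough that the second derivative collapses to $E_\obs^2\Eu_\obs^{-1}\mathfrak{Q}\Eu^2$ with no residual terms; everything downstream is a soft concavity-and-comparison argument. A minor point to handle is that $\mathfrak{Q}$ alone diverges wherever $\Eu=0$, but this is harmless since only the product $\mathfrak{Q}\Eu^2=-\Eu^2-E^{-1}{\rm d}\Eu/{\rm d}\lambda$ ever appears, which stays finite.
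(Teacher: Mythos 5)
Your proof is correct and is essentially the paper's argument in a different packaging: your identity $\tfrac{{\rm d}^2 d_A^{\rm M}}{{\rm d}\lambda^2}=\tfrac{E_\obs^2}{\Eu_\obs}\mathfrak{Q}\Eu^2$, integrated twice, reproduces exactly the paper's exact relation \eqref{dAexact}, and your non-negative $\mathcal{F}d_A$ term in the double integral is precisely the paper's null-energy-condition bound $d_A\leq E_\obs(\lambda_\obs-\lambda)$ of \eqref{dAlambda}. The only difference is cosmetic: the paper derives \eqref{dAexact} by nesting first-order integrals of \eqref{def:zprime} and \eqref{def:dec} and then applies the two bounds separately, whereas you fold both into a single Taylor-remainder comparison of $\Delta=d_A^{\rm M}-d_A$.
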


\begin{proof} 
It follows from \eqref{dAprimebound} that the angular diameter distance is bounded in terms of the affine distance along the null geodesic 
\be
\label{dAlambda}
d_A \leq  E_\obs (\lambda_\obs - \lambda) \,  . 
\ee 
We use \eqref{def:zprime} to make the rewriting 
\be
\label{eq:milnez}
\hspace{-0.25cm} 
\frac{ 1 - \frac{1}{(1+z)^2}}{2}  =   - \!\! \int^{\lambda_\obs}_\lambda  \!\!\!\! {\rm d} \lambda'  \frac{\frac{ {\rm d} z}{{\rm d} \lambda'} }{(1+z)^3} = E_\obs \!\! \int^{\lambda_\obs}_\lambda \!\!\!\! {\rm d} \lambda'  \frac{\Eu}{1+z} \, . 
\ee 
Together with the identity $\frac{\Eu}{1+z} - \Eu_\obs = E_\obs  \int^{\lambda_\obs}_\lambda \!\!  {\rm d} \lambda'   \Eu^2 \, \mathfrak{Q}$ (which follows from multiplying \eqref{def:dec} with $\Eu^2$ and integrating both sides) this yields 
\be
\label{dAexact}
\hspace{-0.19cm}  \frac{ 1 \! - \! \frac{1}{(1+z)^2}}{2 }  \! =\!  \Eu_\obs E_\obs (\lambda_\obs \! -\!  \lambda) \! \! \left(\! 1  +   \frac{\! E_\obs (\lambda_\obs \!  - \! \lambda) \la \! \la \mathfrak{Q}  \Eu^2 \ra \! \ra \! }{2 \Eu_\obs}       \! \right) \!  . 
\ee 
Combining \eqref{dAlambda} with \eqref{dAexact}, and assuming that \eqref{Qident} holds for a distance interval $d_A \in [d_A(\lambda_2), d_A(\lambda_1)]$, this gives the following bound for distances in that interval:  
\be
\label{dAlambdaMilne2}
\Eu_\obs  d_A  \leq  \frac{ 1}{2} \! \left( \! 1 - \frac{1}{(1+z)^2} \! \right)
\ee 
where the right hand side is the dimensionless angular diameter distance in the Milne model, cf., e.g., equation~6~in \cite{Chodorowski:2005wj}. 

\end{proof} 

\noindent
\textbf{Remarks to Theorem~\ref{ths:second}.} 
\noindent 
{Modern cosmological observations violate the bound \eqref{dAlambdaMilne2} for small and intermediate values of redshift. 
Assuming that general relativity is the correct gravitational theory, that the null energy condition is satisfied, and that the geometrical optics approximation holds for describing light, Theorem~\ref{ths:second} thus implies that the condition \eqref{Qident} must be violated at these scales, which again implies that the dimensionful deceleration parameter, $\Eu^2 \mathfrak{Q}$, must be mostly negative along the null rays.  
In FLRW cosmology, the effective deceleration parameter reduces to the scale factor deceleration such that $\Eu^2 \mathfrak{Q} = $ \sayy{$- \ddot{a}/a$}, which, if negative, immediately implies   existence of a gravitationally repulsive source within general relativistic FLRW solutions. 
However, the effective deceleration parameter has a non-trivial interpretation for general space-time geometries, and it does \emph{not} follow from this theorem that the strong energy condition must be violated to explain the observed excess dimming. 
In the below we formulate a sufficient condition for super-Milne dimming. }

{
\begin{theorem}[\textbf{Super-Milne dimming}] 
\label{ths:super} 
Consider as in Theorem~\ref{ths:second} a general-relativistic space-time obeying the null energy condition. 
For a null geodesic congruence with $\Eu_\obs > 0$, the following applies: 
The angular diameter distance, $d_A$, is bounded from below by the Milne universe model in terms of its redshift for a section of the null geodesic path $[\lambda_1,\lambda_2]$, if 
\be
\label{Qavdef}
\hspace{-0.2cm} \frac{   \la \! \la \mathfrak{Q}  \Eu^2 \ra \! \ra }{\Eu_\obs^2}    \leq - 2 \frac{1}{\Eu_\obs \mathcal{L}}    \, , 
\ee 
is satisfied for all $\lambda \in [\lambda_1,\lambda_2]$, corresponding to $d_A \in [d_A(\lambda_2), d_A(\lambda_1)]$, 
where 
\be
\label{L}
\mathcal{L}   \equiv   \frac{1}{ d_A }   \frac{  E_\obs^2 }{ \la \mathcal{F} \ra  }  \, ,  \qquad \la \mathcal{F} \ra  \equiv \frac{\int_{\lambda}^{\lambda_\obs} \!  {\rm d} \lambda' \mathcal{F} }{\lambda_\obs - \lambda}  \, , 
\ee  
is a length scale set by the accumulated focusing of the null beam and the distance to the emitter. 

\end{theorem}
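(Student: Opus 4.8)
The plan is to re-run the proof of Theorem~\ref{ths:second} with the focusing equation used in the opposite direction: there the crude bound $d_A\le E_\obs(\lambda_\obs-\lambda)$ was fed into \eqref{dAexact} to produce the Milne \emph{upper} bound, whereas here I would keep the \emph{exact} focusing correction relating $d_A$ to the affine distance and show that \eqref{Qavdef} forces the $\mathfrak{Q}$-term in \eqref{dAexact} to be at least as negative as that correction, producing the reversed inequality $\Eu_\obs d_A \ge \tfrac12\big(1-(1+z)^{-2}\big)$, i.e.\ the reverse of \eqref{dAlambdaMilne2}. First, since the null energy condition and the non-negativity of $\hsig^{\mu \nu}\hsig_{\mu \nu}$ give $\mathcal{F}\ge 0$, integrating \eqref{focuseq} twice outward from the vertex \eqref{dAinit} yields, on any segment of the past lightcone where $d_A>0$ (so that $\mathcal{L}$ in \eqref{L} is finite and positive),
\be
\label{dAintrep}
d_A(\lambda) \;=\; E_\obs(\lambda_\obs-\lambda) \;-\; \int_{\lambda}^{\lambda_\obs}\!\! {\rm d}\lambda''\,(\lambda''-\lambda)\,\mathcal{F}\,d_A \, ,
\ee
with the integrand evaluated at $\lambda''$; this is consistent with \eqref{dAlambda} since the subtracted integral is non-negative.

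Second, I would insert \eqref{dAintrep} into the target inequality and rearrange, using that $\Eu_\obs E_\obs(\lambda_\obs-\lambda)>0$ (because $\Eu_\obs>0$, $E_\obs>0$, $\lambda<\lambda_\obs$): together with the exact identity \eqref{dAexact}, the claim becomes equivalent to
\be
\label{targetineq}
\la\!\la\mathfrak{Q}\Eu^2\ra\!\ra \;\le\; -\,\frac{2\,\Eu_\obs}{\big(E_\obs(\lambda_\obs-\lambda)\big)^2}\int_{\lambda}^{\lambda_\obs}\!\! {\rm d}\lambda''\,(\lambda''-\lambda)\,\mathcal{F}\,d_A \, .
\ee
It is then bookkeeping to rewrite the hypothesis: by the definitions of $\mathcal{L}$ and $\la\mathcal{F}\ra$ in \eqref{L}, multiplying \eqref{Qavdef} through by $\Eu_\obs^2$ gives $\la\!\la\mathfrak{Q}\Eu^2\ra\!\ra \le -2\Eu_\obs/\mathcal{L} = -\,2\Eu_\obs\, d_A(\lambda)\big(E_\obs^2(\lambda_\obs-\lambda)\big)^{-1}\int_{\lambda}^{\lambda_\obs}{\rm d}\lambda''\,\mathcal{F}$. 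Hence it suffices to show that this upper bound does not exceed the right-hand side of \eqref{targetineq}, which after clearing the common factors (and flipping the inequality, a multiplication by a negative quantity) is the elementary inequality
\be
\label{elem}
(\lambda_\obs-\lambda)\, d_A(\lambda) \int_{\lambda}^{\lambda_\obs}\!\! {\rm d}\lambda''\,\mathcal{F} \;\ge\; \int_{\lambda}^{\lambda_\obs}\!\! {\rm d}\lambda''\,(\lambda''-\lambda)\,\mathcal{F}\,d_A \, ,
\ee
again with $d_A=d_A(\lambda'')$ under the last integral.

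Finally, \eqref{elem} follows in two moves: bound the kernel $0\le \lambda''-\lambda\le\lambda_\obs-\lambda$, and then bound the remaining $\mathcal{F}$-weighted mean of $d_A$ over $[\lambda,\lambda_\obs]$ by its endpoint value $d_A(\lambda)$. The latter is where the only non-mechanical input enters: it holds provided $d_A(\lambda'')\le d_A(\lambda)$ for all $\lambda''\in[\lambda,\lambda_\obs]$, which is guaranteed when the segment does not extend past the first turning point of $d_A$ as one moves away from the observer --- $d_A$ increases monotonically away from $\obs$ up to that point --- and this is consistent with the ordering $d_A\in[d_A(\lambda_2),d_A(\lambda_1)]$ stated in the theorem. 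I expect this monotonicity requirement to be the main obstacle to an entirely unrestricted statement: everything upstream is algebra on \eqref{dAexact}, \eqref{dAintrep}, and the definition of $\mathcal{L}$, whereas controlling the $\mathcal{F}$-weighted average of $d_A$ by $d_A(\lambda)$ is precisely what ties the validity of the bound to lightcone segments lying inside the first maximum of the angular diameter distance (and hence well inside the first conjugate point).
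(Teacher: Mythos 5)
Your proposal is correct and follows essentially the same route as the paper: both start from the first-order Taylor remainder of the focusing equation~\eqref{focuseq}, bound the kernel $\lambda''-\lambda$ by $\lambda_\obs-\lambda$ and the $\mathcal{F}$-weighted values of $d_A$ by $d_A(\lambda)$, and combine the result with the exact identity~\eqref{dAexact} and the definition~\eqref{L} of $\mathcal{L}$. The only substantive differences are that your rearrangement reduces everything to a single elementary integral inequality and thereby avoids the division performed in~\eqref{dAlower} (which tacitly needs $1-E_\obs(\lambda_\obs-\lambda)/\mathcal{L}>0$), and that you explicitly flag the monotonicity requirement $d_A(\lambda'')\le d_A(\lambda)$ as holding only up to the first maximum of the angular diameter distance, whereas the paper attributes it to the null energy condition alone.
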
 

\begin{proof} 
We write the angular diameter distance as 
\be
\label{dAremainder}
d_A = E_\obs (\lambda_\obs - \lambda) + \Delta d_A   \, , 
\ee 
where $\Delta d_A$ is the remainder term of the first order series expansion of $d_A$, which by \eqref{focuseq} can be expressed as 
\bea
\label{remainder}
&&\hspace{-0.585cm} \Delta d_A  =\! \int_{\lambda}^{\lambda_\obs} \!\!\!\!\!\! {\rm d} \lambda' (\lambda' \! - \! \lambda) \frac{{\rm d}^2 d_A }{ {\rm d} \lambda'^2} =  - \!\! \int_{\lambda}^{\lambda_\obs} \!\!\!\!\!\! {\rm d} \lambda' (\lambda' \! - \! \lambda)  \mathcal{F} d_{A} ,    
\eea 
The null energy condition dictates that $d_A(\lambda')\leq d_A(\lambda)$ for $\lambda' \in [\lambda, \lambda_\obs]$. Exploring this, and that $\lambda' \leq \lambda_\obs$ for the same interval, we have   
\bea
\label{remainder2}
 \Delta d_A   
 && \, \, \geq \, -  E^2_\obs (\lambda_\obs - \lambda)^2  \mathcal{L}^{-1}  \, , 
\eea
which when inserted in \eqref{dAremainder} gives 
\be
\label{dAremainder2}
d_A \,  \geq \, E_\obs (\lambda_\obs - \lambda) \left(1 -   E_\obs (\lambda_\obs - \lambda)  \mathcal{L}^{-1}   \right)  \, . 
\ee 
We now use \eqref{dAexact} in combination with the result in \eqref{dAremainder2} to obtain 
\be
\label{dAlower}
\hspace{-0.1cm} \frac{ 1 - \frac{1}{(1+z)^2}}{2}  \leq \Eu_\obs d_A \frac{1  +    \frac{E_\obs (\lambda_\obs  -  \lambda) \la \! \la \mathfrak{Q}  \Eu^2 \ra \! \ra }{2 \Eu_\obs}  }{1 -  \frac{E_\obs (\lambda_\obs - \lambda)}{  \mathcal{L}}     }    \, \,  . 
\ee 
Assuming that \eqref{Qavdef} holds for an interval $d_A \in [d_A(\lambda_2), d_A(\lambda_1)]$, this gives the following bound for distances in that interval:  
\be
\label{dAlambdaMilnesuper}
\Eu_\obs  d_A  \geq  \frac{ 1}{2} \! \left( \! 1 - \frac{1}{(1+z)^2} \! \right) . 
\ee 

\end{proof} 

\noindent
\textbf{Interpretation of the length scale $\mathcal{L}$.} 
\noindent 
The inverse length scale $\mathcal{L}^{-1}$ is zero when the light propagates in empty space (such that $k^\mu k^\nu R_{\mu\nu} = 0$) and in the absence of Weyl focusing of the beam (such that $\hat{\sigma}_{\mu\nu} = 0$). 
From an order of magnitude estimate in a dust dominated universe with negligible Weyl focusing along the beam, we have $\mathcal{F}/E^2 \approx 4\pi G \rho$, where $\rho$ is the mass density of the dust and $G$ is the gravitational constant. 
We thus have from the definition \eqref{L} that $\mathcal{L}^{-1} \sim d_A  4\pi G \la  \rho /(1+z)^2 \ra$. 
For typical light beams in expanding universe models, the average $4\pi G \la  \rho /(1+z)^2 \ra$ is usually smaller than or of order the square of the expansion rate $\theta_\obs$, which in turn sets a characteristic inverse length scale for cosmology (Hubble length scale in FLRW cosmology). 
When the emitter is in the cosmic vicinity of the observer such that $d_A \ll \theta^{-1}_\obs$ we have $\mathcal{L}^{-1} \lesssim d_A \theta^2_\obs \ll \theta_\obs$, and in such cases we thus expect  
the right hand side of \eqref{Qavdef} to be small, except for the case of extremely lensed light beams.  

\vspace{5pt}
\noindent
\textbf{Remarks to Theorem~\ref{ths:super}.} 
\noindent 
A trivial way to satisfy \eqref{Qavdef} is to introduce dark energy or another energy-momentum source that violates the strong energy condition, but here we shall focus on situations where the strong energy condition holds. 
To satisfy \eqref{Qavdef} and the strong energy condition simultaneously, there must be non-trivial local violations of the FLRW idealisation with imprints on the redshift of light that do not cancel along typical null beams.

Systematic departures of $\Eu^2 \mathfrak{Q}$ from the length scale deceleration $(\theta/3)^2(-1 \! - \!  3  \dot{\theta}  / \theta^2)$ can occur if $e^\mu$ aligns systematically with the multipole coefficients in \eqref{qpoles}.       
This happens for instance in the non-Copernican LTB void models\footnote{Concretely, the terms $\propto e^\mu D_\mu \theta$ (see dipole coefficient in \eqref{qpoles}) give systematic negative contributions in $\mathfrak{Q}$ when the photon is propagating towards the center of the void (towards less density and faster volume expansion).} \cite{Pascual-Sanchez:1999xpt,Celerier:1999hp}, which is the underlying reason why these models can produce a breaking of the bound \eqref{dAlambdaMilne2} for a central observer. 
In general, the spatial propagation direction $e^\mu$ of the photons is determined by the equation 
\bea
\label{kderive}
\hspace{-0.6cm} \frac{h^{\mu}_{\, \nu}  k^\alpha  \nabla_\alpha e^\nu }{E}  \!   = \!  e^\mu e^\nu e^\rho \sigma_{\nu \rho} \! - \! e^\nu \sigma^{ \mu}_{\; \nu}  \! - \! e^\nu  \omega^{\mu}_{\; \nu}  \!-\! e^\mu e^\nu a_\nu \! + \! a^\mu    , 
\eea  
which can be obtained from the geodesic equation $k^\nu \nabla_\nu k^\mu \! =\! 0$ and the decomposition~\eqref{kdecomp}. 
The differential equation~\eqref{kderive} makes explicit how the photon direction of propagation is responding to the cosmic kinematics.  
Concretely, $e^\mu$ is driven towards alignment with $a^\mu$ and eigendirections of $\sigma^{\mu}_{\; \nu}$, while $\omega^{\mu}_{\; \nu}$ contributes with a deflection effect perpendicular to the initial direction of propagation of the photon bundle.  
In addition, shear causes squeezing of structures along the shear eigendirections, and photons that propagate along axes of more expansion (positive shear) will tend to spend more time in the structure than photons that propagate along axes of less expansion (negative shear).  
These systematic trends caused by anisotropic expansion  
hold the potential to produce non-cancelling effects that may satisfy the excess-dimming condition \eqref{Qavdef} in a universe that is everywhere locally decelerating.

}

\subsection{Discussion of model scenarios} 
In linearly-perturbed FLRW models, there is systematic alignment of $e^\mu$ with shear eigendirections, but the resulting systematic contributions from the term $e^\mu e^\nu \sigma_{\mu \nu}$ cancels  with systematic contributions from perturbations in the expansion rate $\theta$ in \eqref{def:Eu}, such that redshift is well modelled by the FLRW background expansion rate in these models  \cite{Rasanen:2011bm}. 
Similar cancelations have been noted for Swiss-cheese models \cite{Koksbang:2017arw,Rasanen:2011bm}. 
Consequently, the redshift measured by observers comoving with the matter in Swiss-cheese models with LTB and Szekeres structures tends to be well modelled by the FLRW background model \cite{Fleury:2014gha,Szybka:2010ky,Koksbang:2021zyi}, although counter examples exist \cite{Lavinto:2013exa}. 

The accurate predictions of redshift by the FLRW {relation in models where this would not \emph{a priori} have been expected, may be understood through a convenient choice of cosmic reference frame in these models.}   
If there is \emph{a} cosmic reference frame that is kinematically close to a reference FLRW model {such that $\theta/3$ is sufficiently close-to-homogeneous and such that $\sigma_{\mu \nu}/\theta$ and $a^\mu/\theta$ have small norms, and which is in addition close to the observers and emitters of light in terms of a relative Lorentz boost},  
it can be shown from \eqref{def:zprime} and \eqref{def:Eu} that the redshift is close to that {predicted by the FLRW reference model}. 

Hypersurface forming frames where shear is almost vanishing and expansion of space is almost homogeneous, thus inheriting the properties of the Poisson gauge from linearised FLRW perturbation theory, {have been shown to exist in a variety of models, including} certain LTB solutions\footnote{This is not in contradiction with the fact that the LTB models can account for supernova dimming for a central observer in a large underdensity \cite{2009JCAP...09..022E}.} 
\cite{VanAcoleyen:2008cy}, post-Newtonian perturbation theory \cite{Clifton:2020oqx}, and non-linear numerical simulation studies \cite{Giblin:2018ndw}.  
This may be the reason why various (analytical and simulated) models {that have large local density contrasts relative to an FLRW reference model are well described by the FLRW distance--redshift relation (or more generally the Dyer-Roeder relation\footnote{The Dyer-Roeder approximation \cite{1972ApJ...174L.115D,Dyer:1973zz} asserts that the average redshift of the light is given by the background FLRW model, and that the only systematic contribution to the angular diameter distance from lumpiness of structure along the light beam comes from a scaling of the mean density of matter.} in cases of unfair sampling of the density of matter by the light)} \cite{Biswas:2007gi,Fleury:2014gha,Giblin:2016mjp,Sanghai:2017yyn,East:2017qmk,Adamek:2018rru,Macpherson:2022eve}.   

We remark that a space-time model that conforms to the Dyer-Roeder approximation with an expanding FLRW background reference frame (i.e., $\Eu > 0$ for the background) subject to the strong energy condition (i.e., $\mathfrak{Q} \geq 0$ for the background) obeys the {sub-}Milne bound \eqref{dAlambdaMilne2} for the average observed $d_A$ and $z$ of the model. 
It is therefore imperative to examine Copernican space-time scenarios in which the Dyer-Roeder approximation is not accurate\footnote{See \cite{1986A&A...168...57E,PhysRevD.40.2502,Rasanen:2008be,Clarkson:2011br,Fleury:2014gha,Koksbang:2020qry,Koksbang:2021zyi} for investigations into the validity of the Dyer-Roeder approximation in various model scenarios.}, if the observed dimming of supernovae is to be explained without dark energy.  
Such model scenarios have been examined \cite{Lavinto:2013exa,Sikora:2016rmq,Koksbang:2021zyi}, and while these should not be considered competitive cosmological models, they do present proof-of-concept models where the distance--redshift relation is distorted away from the Dyer-Roeder prediction. 
{Such models are of great interest to explore further, with particular focus on models that satisfy the condition \eqref{Qavdef}.}

\section{Conclusion} 
\label{sec:discussion} 
We have investigated light propagation in the geometrical optics limit of general relativity, and examined, from first principles, the conditions under which cosmological observations can be made compatible with a Universe with ordinary matter and radiation only. 
The effective deceleration parameter $\mathfrak{Q}$, which determines the observed dimming of supernovae, was first discussed in detail in \cite{Clarkson:2011br} where the implications for the interpretation of supernova data were also addressed. 
{In this paper we have formulated Theorem~\ref{ths:second} and Theorem~\ref{ths:super}, which present a necessary and a sufficient condition, respectively, for the observed super-Milne dimming {of light from} supernovae.}    
Systematic effects from inhomogeneities have previously been proposed to hold the potential to mimic dark energy through backreaction effects on global volume dynamics \cite{Buchert:2018vqi}, such as the backreaction functionals proposed by~Buchert in \cite{Buchert:1999er,Buchert:2001sa,Buchert:2019mvq}. 
An upper bound on the expansion rate of the cosmic fluid frame in terms of the Milne expansion rate was formulated by R\"as\"anen for general-relativistic irrotational-dust space-times~\cite{Rasanen:2005zy}. This bound however, does not imply that the distance--redshift relation in such space-times is bounded from above by the Milne model -- a counter example can be constructed with an LTB model, as discussed in the introduction.  

In this paper, we have analysed observables directly; concretely the stated theorems apply to measurements that probe angular diameter distance (or luminosity distance) and redshift. 
In order to arrive at a systematic excess dimming effect relative to that of a uniform universe model, photons must systematically \sayy{pick up} local irregularities of the space-time as they propagate through it.  
As discussed in the above, we \emph{do} generally expect photons to systematically align with eigendirections of the kinematic variables of the cosmic reference frame, and such alignments hold the potential to cause systematic contributions in the distance--redshift relation that mimic dark energy in our Universe. 
{Alignments of this type are indeed present in the $\Lambda$CDM model, but the resulting leading order correction terms to the background FLRW redshift function cancel, cf. \cite{Rasanen:2011bm}, 
and most modern cosmological simulations agree well with the FLRW prediction of redshift.   
Exceptions of models exist, however, where Copernican observers measure redshifts with systematic departures from those predicted by the background FLRW model. While it remains to show that a competitive cosmological model that exhibits the discussed excess dimming can be formulated, it is a possibility that we live in a Universe that exhibits such properties. }   
In the motivation of this paper, we have focused on the dimming of light from supernovae, but the derivations apply to any observational probe of redshift and angular diameter (or luminosity) distance.

\vspace{6pt} 
\begin{acknowledgments}
This work is part of a project that has received funding from the European Research Council (ERC) under the European Union's Horizon 2020 research and innovation programme (grant agreement ERC advanced grant 740021--ARTHUS, PI: Thomas Buchert). 
I would like to thank Thomas Buchert, Syksy~R\"{a}s\"{a}nen and David~L.~Wiltshire for useful comments. 
\end{acknowledgments}

\bibliographystyle{mnras}
\bibliography{refs}

\end{document}